\theoremstyle{plain}
\newcommand{\new}[1]{\textcolor{teal}{#1}}
\newtheorem{theorem}{Theorem}
\newtheorem{lemma}[theorem]{Lemma}
\begin{document}


\title{Entanglement witnessing by arbitrarily many independent observers\\recycling a  local quantum shared state}


\author{Chirag Srivastava}
\affiliation{Harish-Chandra Research Institute, HBNI, Chhatnag Road, Jhunsi, Allahabad 211 019, India}
\author{Mahasweta Pandit}
\affiliation{Institute of Theoretical Physics and Astrophysics, Faculty of Mathematics, Physics and Informatics, University of Gdańsk, 80-308 Gdańsk, Poland}
\author{Ujjwal Sen}
\affiliation{Harish-Chandra Research Institute, HBNI, Chhatnag Road, Jhunsi, Allahabad 211 019, India}



\begin{abstract}
We investigate the scenario where an observer, Alice, shares a two-qubit state with an arbitrary number of observers, Bobs, via sequentially and independently recycling the qubit in possession of the first Bob. It is known that there exist entangled states which can be used to have an arbitrarily long sequence of Bobs who can violate the Clauser-Horne-Shimony-Holt (CHSH) Bell inequality with the single Alice. We show that there exist entangled states that do not violate the Bell inequality and whose entanglement can be detected by an arbitrary number of Bobs by suitably choosing the entanglement witness operator and the unsharp measurement settings by the Bobs. This proves that the set of states that can be used to witness entanglement sequentially is larger than those that can witness sequential violation of local realism.
There exist, therefore, two-party quantum correlations that are Bell ``classical'',  but
whose entanglement ``nonclassicality'' can be  
witnessed sequentially and independently by an arbitrarily large number of observers at one end of the shared state with the single observer at the other end.
\end{abstract}

\maketitle

\section{Introduction}
\noindent Quantum entanglement~\cite{horodecki09,Guhne09}
~and 
Bell inequality violation~\cite{Bell64,Brunner14} are among the most prominent non-classical phenomena witnessed in quantum systems.
While entanglement is necessary to exhibit Bell inequality violation, the converse is not true. In \cite{Werner89}, Werner showed that there exist  entangled 
bipartite states, invariant under equal local unitaries, that admit a local hidden variable (LHV) model for arbitrary von Neumann measurements. 
It was later proven~\cite{Barrett02} that such states exist even for the most general non-sequential measurements or the so-called positive operator valued measures (POVMs). 
This provides a natural categorization among entangled states, viz. those that violate a Bell inequality and those that satisfy them. For brevity, and according to the practice prevalent in the literature, we will sometimes refer to them as ``nonlocal'' and ``local'' entangled states. 
It is clear that the ``local'' entangled states are so, within a certain set-up of Bell inequality violation experiments, specifying  e.g. the number of settings per site, the number of outcomes per setting, etc. 

Entangled states that violate a Bell inequality have been found to be 
useful in device-independent tasks involving key distribution \cite{Mayers98,Barrett05,Acin07}, randomness amplification \cite{Colbeck12}, and randomness expansion \cite{Colbeck111,Colbeck112,Pironio10}.
However, local entangled states can also be resourceful, and in particular for two qubits, all entangled states are useful for quantum teleportation beyond the classical limit \cite{Ben93,Hor99,Nielsen02}

In this work, we consider the following scenario \cite{Silva15}:
~we investigate whether given an initial bipartite entangled state, whose one-half is controlled by a single observer, Alice, and the other half by multiple sequential observers, Bobs, can give rise to a sequential violation of the Bell Clauser-Horne-Shimony-Holt (CHSH) inequality~\cite{Clauser69}. If $n$ Bobs observe violations then we call the initial state $n$-recyclable with respect to nonlocal correlations.
It was shown that for an arbitrary $n$, there exist $n$-recyclable states with respect to nonlocal correlations, when their input distributions (probabilities of measurement settings) are modified in such a way that one of the inputs is highly favored, i.e., a ``biased'' measurement strategy is employed ~\cite{Silva15}. 
Later, it was found that such states can be found even for the unbiased case~\cite{Brown20}. 
For 
other works on sequential violation of Bell inequalities, see e.g. \cite{Mal16,Hu18,Saha19,Das19,Vallone20,Fei21,roy20,cabello20,ren21,zhu21,Cheng21,Hall21}.

The measurement strategy employed in \cite{Brown20}, is sufficient for an unbounded number of recycled nonlocal correlations if the first Bob share certain nonlocal quantum states with the Alice. However,  determination of whether all nonlocal quantum states can guarantee generation of unbounded number of such nonlocal correlations remains  an important open problem. The same question can be raised about the set of quantum states shared by the first pair of observers in the context of entangled correlations instead of nonlocal correlations. Since violation of the CHSH inequality also implies the existence of entanglement, the states that are  in the set of $n$-recyclable states with respect to the nonlocal correlations, are also the $n$-recyclable entangled states.\\ 
~In this paper, we take a step further and investigate whether there exist local entangled states that are $n$-recyclable with respect to entangled correlations. We find this problem to be particularly interesting for the following two reasons. Firstly, as we present an affirmative answer to the problem, it is proven that the set of $n$-recyclable entangled states is strictly larger than the set of $n$-recyclable states with respect to the nonlocal correlations, and this holds for arbitrary \(n\). Secondly, since we show that even weakly entangled states can exhibit  robustness and be $n$-recyclable, for arbitrary \(n\), the results of this article will be potentially useful in realistic applications. For previous works on sequential detection of entanglement, see \cite{Bera18,Maity20,sriv21,Vallone20}.\\
Sequential detection of entanglement and finding the length of that sequence tell us about a fundamental aspect of entanglement. They provide us information about whether there exists a fundamental limit on the recyclability of the crucial resource. 
Also, the finding that at least for some entangled states, the sequence is unbounded, leads us to a potentially fresh classification of entangled states between bounded and unbounded sequence states. Currently, there is a possibility that the former set is empty.
Furthermore, the sequential detection scenario provides us with another face of the quantum information gain - state disturbance trade-off:
entanglement detection implies information gain about the system, which comes at a cost of disturbing the state, and 
it is not obvious a priori how long the sequences can be, for specific entangled state inputs. 
Along with these fundamental aspects, sequential detection can potentially be useful for applications in quantum technology. In particular, it could be crucially important in physical substrates where state preparation is costly~\cite{Brown20}.

\section{prerequisites}
\subsection*{Entangled states and local hidden variables
}

In general, any two-qubit quantum state, $\rho$, 
can be expressed as 
\begin{equation} \label{genrho}
\begin{split}
\rho=\frac{1}{4}[\mathbb{I}_2 \otimes \mathbb{I}_2 + \sum_{i=1}^3 m_i \mathbb{I}_2 \otimes \sigma_i &+ \sum_{i=1}^3 n_i \sigma_i \otimes \mathbb{I}_2\\ +& \sum_{i,j=1}^3 t_{ij} \sigma_i \otimes \sigma_j]
\end{split}
\end{equation}
where $\sigma_i \in \{\sigma_1,\sigma_2,\sigma_3\}$ are the Pauli matrices, $\mathbb{I}_2$ is the identity operator on the 
qubit 
Hilbert space, and $m_i$, $n_i$, and $t_{ij}$ are real numbers in $[-1,1]$ 
such that $\rho$ is positive-semidefinite. 

The state $\rho$ is entangled if and only if an eigenvalue of its partial transpose is negative~\cite{Peres96,Hor96}. The matrix $T = (t_{ij})$ transforms like a tensor under local unitary transformations on the qubits and therefore referred to as the correlation tensor. Let $u_0$ and $u_1$ denote the two largest eigenvalues of the matrix $T^\dagger T$. A necessary and sufficient condition~\cite{horodecki95} that $\rho$ will violate a CHSH Bell inequality is the violation of the inequality
\begin{equation}\label{local}
u_0+u_1\leq 1.
\end{equation} 
Once ineq.~(\ref{local}) is satisfied, the state admits a local hidden variable  model for any set of two rank-1 projection-valued measurements per site~\cite{Fine82}. In this paper, we often refer to such states as ones that admit an LHV model. 
\subsection*{Entanglement witnesses}
The concept of entanglement witnesses \new{\cite{Guhne09,horodecki09,terhal00,Sarbicki14}} provides an efficient method to detect entanglement in the laboratory. This method is a consequence of the Hahn-Banach theorem~\cite{Simmons63,Lax02} which says that there always exists a functional on a normed linear space which separates any point in the space from any closed convex set of that space that does not contain the point.
Thus, as the set of separable states form a closed and convex set on the space containing the density matrices, for every entangled state, \(\rho_e\), one can construct an operator $W$, such that
\begin{equation}
    \text{Tr}\{\rho W\}\geq 0, ~~ \forall \rho \in \mathcal{S}
\end{equation}
and 
\begin{equation}
    \text{Tr}\{\rho_e W\}<0,
\end{equation}
where $\mathcal{S}$ is the set of separable states, in the space to which \(\rho\) and \(\rho_e\) belong.
%
Bell inequalities are also 
entanglement witnesses, but 
cannot be used to detect entanglement of states admitting the relevant LHV model.

\section{Scenario}\label{scenario}
We consider the simplest scenario~\cite{Silva15} of sequential sharing of a bipartite entangled quantum state, whose one half is in possession of a single observer, Alice $(A)$, whereas the other half is being measured (and passed on) by $n$ observers, Bobs, with $B_k$ referring to the $k^{\text{th}}$ observer.
These multiple observers on the second system act sequentially and independently on their part of the shared state to detect the shared entanglement with the single observer, Alice, at the other site. The task is to maximize the number of sequential observers who can detect entanglement with the single observer. Since an arbitrary number of independent observers at the second site can share recycled ``nonlocality'' (Bell inequality violation)~\cite{Brown20}, here, we only consider entangled states which admit an LHV model. Therefore, $A$ and $B_1$ start with an entangled state satisfying ineq.~\eqref{local}. $B_1$ performs his measurements and passes his part of the shared state to $B_2$, who does the same and passes to $B_3$, and this process continues until the $n^{\text{th}}$ Bob. Later, Alice performs her measurement and compares her statistics with all the Bobs to ascertain the number of  Bobs that were able to detect shared entanglement with her. 

\subsection{Shared state and measurement strategy}
Consider the situation where Alice, $A$, and the first Bob, $B_1$, in the sequence of observers at the other end, share the  state,
\begin{equation}\label{initstate}
\begin{split}
   \rho_{AB_1}=\frac{1}{4}[\mathbb{I}_2 \otimes \mathbb{I}_2 -\cos \theta\sigma_1 \otimes \sigma_1 - \alpha \sin \theta \sigma_2 \otimes \sigma_2\\ - \alpha \sin \theta \sigma_3 \otimes \sigma_3], 
\end{split}
\end{equation}
where $\theta \in (0, \frac{\pi}{4}]$ and $\frac{1-\cos\theta}{2\sin\theta}<\alpha\leq 1$.
Note that $\rho_{AB_1}$ is entangled and admits an LHV model, i.e., \eqref{local} holds. For $\alpha=1$ and $\theta>\frac{\pi}{4}$, $\rho_{AB_1}$ becomes 
Bell inequality violating. 
Consider now the following measurement strategy adopted by $A$ and $B_k$, where $k=1,2,\dots,n$. $A$ applies measurement settings given by 
\begin{equation}
  \left\{\frac{\mathbb{I}_2+\sigma_i}{2}, ~\frac{\mathbb{I}_2-\sigma_i}{2}\right\},  
\end{equation}
 whereas $B_k$ applies measurement settings, $\{E^{(i)}_k,\mathbb{I}_2-E^{(i)}_k\}$, where 
\begin{equation}\label{Bmeas}
    E^{(i)}_k=\frac{\mathbb{I}_2+\lambda^{(i)}_k\sigma_i}{2},
\end{equation}
for $i=1,2,3$, and where $\lambda^{(i)}_k \in [0,1]$ is the sharpness parameter of the corresponding measurements.
Thus $A$ and $B_k$ can evaluate the expectation values of the entanglement witness operator given by
\begin{equation}
W_k=\frac{1}{4}\left[\mathbb{I}_2\otimes \mathbb{I}_2+ \sum_{i=1}^3\sigma_i \otimes \lambda^{(i)}_k\sigma_i\right],
\end{equation}
for any state shared between them using their measurement strategies. 

It is known that $\langle W_k \rangle \geq 0$ for all separable states for \(\lambda_k^{(i)} =1\) \cite{guhne02,guhne03}, while the cases for lower \(\lambda_k^{(i)}\) can be proven in the following way.

\begin{proof}
Every separable state can be expressed as a convex combination of pure product states. Therefore it is enough to prove that  $\langle W_k \rangle \geq 0$ for any pure product state. A two-qubit pure product state can be written in the form,
\begin{equation}
    \rho_{AB}^{sep} = \frac{\mathbb{I}_2+\Vec{r}_{A}.\Vec{\sigma}_{A}}{2} \otimes \frac{\mathbb{I}_2+\Vec{r}_{B}.\Vec{\sigma}_{B}}{2},
\end{equation}
where \(\Vec{\sigma}\)  is the vector of Pauli spin-1/2 matrices,
\(\Vec{r}_{A/B}\) are real three-dimensional unit vectors. 
Now,
\begin{equation}
    \text{Tr}\left(W_k \rho_{AB}^{sep}\right) = \frac{1}{4} \left(1 + \sum_{i=1}^3 \lambda^{(i)}_k r^{(i)}_A r^{(i)}_B\right).
\end{equation}
Let us define another vector \(\Vec{s}_B\) such that \(s_B^{(i)}=\lambda^{(i)}_k r_B^{(i)}\), \(i=1,2,3\). Note that $| \Vec{s}_{B}| \leq 1$, since $| \Vec{r}_B| = 1$ and $0\leq \lambda^{(i)}_k\leq 1$. Thus,
\begin{equation}
    \text{Tr}\left(W_k \rho_{AB}^{sep}\right) = \frac{1}{4} \left(1 + \Vec{r}_A.\Vec{s}_B\right)\geq 0.
\end{equation}
The inequality comes by applying the Cauchy-Schwarz inequality on 
$| \Vec{r}_{A}. \Vec{s}_{B}|$, and using $| \Vec{r}_{A}| = 1$ and $| \Vec{s}_{B}| \leq 1$.
\end{proof}


\noindent \emph{Post-measurement state:} After performing the measurement, $B_k$, passes his part of the shared state to the next Bob, i.e. $B_{k+1}.$ Let the state shared by $A$ and $B_k$ before \(B_k\) performs his measurement be  $\rho_{A{B_k}}$. Then the state shared by $A$ and $B_{k+1}$, before \(B_{k+1}\) performs his measurement, is given by the L\"uders rule: 
\begin{equation}
\begin{split}
    \rho_{AB_{k+1}}=\frac{1}{3}\sum_{i=1}^3\Big[\mathbb{I}_2\otimes\sqrt{E^{(i)}_k}.\rho_{AB_k}.\mathbb{I}_2\otimes\sqrt{E^{(i)}_k} \\ +~\mathbb{I}_2\otimes\sqrt{\mathbb{I}_2-E^{(i)}_k}.\rho_{AB_k}.\mathbb{I}_2\otimes\sqrt{\mathbb{I}_2-E^{(i)}_k}\Big].
\end{split}
\end{equation}
Note that the measurement settings used by any one Bob are equally probable, a consequence of the fact that each Bob acts independently, i.e., they do not know the measurement outcomes of the previous Bobs, or in other words, the measurement settings are ``unbiased''.

\section{Arbitrarily long sequence of entanglement detections for a ``local'' state}
In this section, we discuss how $A$ can detect entanglement, in an entangled state admitting an LHV model, with an arbitrary number of  sequential observers, $B_k$. To begin, $A$ and $B_1$ share the state $\rho_{AB_1}$,  and we assume that in the measurement strategy~\eqref{Bmeas} adopted by $B_k$, 
\begin{equation}
\label{mrignayani}
    \lambda^{(1)}_k=1 \text{ and }  \lambda^{(2)}_k=\lambda^{(3)}_k=\lambda_k,
\end{equation}
 where $\lambda_k \in (0,1)$. 
 While \(\lambda_k^{(i)}\), for \(i=1,2,3\), determine the sharpness of the measurement strategy of \(B_k\), in the case when~\eqref{mrignayani} holds, we consider \(\lambda_k\) as determining the sharpness.

Observers $A$ and $B_1$ can witness entanglement of $\rho_{AB_1}$, with $B_1$ using $\lambda_1$ as the sharpness parameter, if
$$\langle W_1 \rangle_{\rho_{AB_1}}=\frac{1}{4}\left[1-\cos \theta-2 \alpha \sin \theta \lambda_1\right]<0$$
$$ \implies \lambda_1 > \frac{1-\cos \theta}{2 \alpha \sin \theta}.$$
The state shared by $A$ and $B_k$ 
is given by
\begin{widetext}
\begin{equation}
\begin{split}
\rho_{AB_k}=\frac{1}{3}\Big[\frac{3+2\sqrt{1-\lambda^2_{k-1}}}{2}\rho_{AB_{k-1}} &+ \frac{1}{2}\mathbb{I}_2 \otimes \sigma_1 .\rho_{AB_{k-1}}.\mathbb{I}_2 \otimes \sigma_1 \\&+ \frac{1-\sqrt{1-\lambda^2_{k-1}}}{2} \left\{ \mathbb{I}_2 \otimes \sigma_2. \rho_{AB_{k-1}}.\mathbb{I}_2 \otimes \sigma_2+\mathbb{I}_2 \otimes \sigma_3 .\rho_{AB_{k-1}}.\mathbb{I}_2 \otimes \sigma_3 \right\} \Big].
\end{split}
\end{equation}
\end{widetext}
Therefore $A$ and $B_k$ can detect entanglement if $\langle W_k \rangle_{\rho_{AB_k}}<0$
$$\implies \lambda_k > \frac{1-\cos \theta\prod_{i=1}^{k-1}\frac{1+2\sqrt{1-\lambda^2_{i}}}{3}}{ 2 \alpha \sin \theta \prod_{i=1}^{k-1}\frac{1+\sqrt{1-\lambda^2_{i}}}{3}}.$$
For a given \(\epsilon > 0\),
 let us define the sequence, \(\{\lambda_k\}_{k=1}^{\infty}\), via the iterative rule, 
\begin{eqnarray}\label{lam}
\lambda_k &:=& (1+\epsilon)\frac{1-\cos \theta \prod_{i=1}^{k-1}\frac{1+2\sqrt{1-\lambda^2_{i}}}{3}}{ 2 \alpha \sin \theta \prod_{i=1}^{k-1}\frac{1+\sqrt{1-\lambda^2_{i}}}{3}}, \nonumber\\
\text{with }~\lambda_1 &:=&(1+\epsilon) \frac{1-\cos \theta}{2 \alpha \sin \theta},
\end{eqnarray}
if  $0<\lambda_{k-1}<1$ and undefined otherwise.
It is however easier to consider another sequence which bounds $\lambda_{k}$ from above. We define another iterative  sequence, \(\{\gamma_k\}_{k=1}^{\infty}\), via the iterative rule,
\begin{eqnarray}\label{gam}
\gamma_k &:=& (1+\epsilon)3^{k-1}\frac{1-(1- \frac{\theta^2}{2})\prod_{i=1}^{k-1}(1-\frac{2\gamma_i^2}{3})}{\alpha \theta \prod_{i=1}^{k-1} (2-\gamma_i^2)}, \nonumber\\ \text{with }~\gamma_1 &:=&(1+\epsilon)\frac{\theta}{2\alpha}, 
\end{eqnarray}
if  $0<\gamma_{k-1}<1$ and undefined otherwise.


\begin{lemma}\label{l1}
The sequence $\gamma_k(\theta)$, defined in Eq.~\eqref{gam}, is strictly increasing, in its range of validity. 
\end{lemma}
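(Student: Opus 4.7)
The plan is to establish $\gamma_{k+1} > \gamma_k$ by forming the ratio $\gamma_{k+1}/\gamma_k$ directly from the recursion~\eqref{gam} and showing it exceeds $1$. To keep the bookkeeping manageable, I would introduce the shorthands $P_k := \prod_{i=1}^{k-1}(1-2\gamma_i^2/3)$ and $Q_k := \prod_{i=1}^{k-1}(2-\gamma_i^2)$ (with empty products equal to $1$), so that
\[
\gamma_k = \frac{(1+\epsilon)\,3^{k-1}}{\alpha\theta}\cdot\frac{1-(1-\theta^2/2)\,P_k}{Q_k}.
\]
Within the stated range of validity one has $\gamma_i \in (0,1)$, giving $P_k \in (0,1]$ and $2-\gamma_k^2>0$; together with $0<1-\theta^2/2<1$, this guarantees that the quantity $x_k := (1-\theta^2/2)P_k$ lies strictly in $(0,1)$, so that $1-x_k>0$ and the ratio can be safely formed and compared with $1$.

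Next I would exploit the telescoping identities $P_{k+1}=P_k(1-2\gamma_k^2/3)$ and $Q_{k+1}=Q_k(2-\gamma_k^2)$ to collapse the ratio to a one-line rational expression in $x_k$ and $\gamma_k$, of the form
\[
\frac{\gamma_{k+1}}{\gamma_k}=\frac{3\,[(1-x_k)+\tfrac{2}{3}x_k\gamma_k^2]}{(2-\gamma_k^2)(1-x_k)}.
\]
Cross-multiplying by the positive denominator, the inequality $\gamma_{k+1}/\gamma_k>1$ becomes a purely algebraic statement that, after routine simplification, reduces to
\[
(1-x_k)+\gamma_k^2(1+x_k)>0,
\]
which is manifest from $x_k\in(0,1)$ and $\gamma_k^2>0$.

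The main worry is really only keeping the algebra tidy: an expansion-first approach would generate unwieldy expressions involving three kinds of factors ($\theta$-terms, $P_k$-terms and $\gamma_k$-terms), but the telescoping observation collapses the whole computation to a single rational expression and a two-term positivity check. The genuine load-bearing fact is the strict inequality $x_k<1$, which rests on $\theta>0$ (so that $1-\theta^2/2<1$); without it the denominator $1-x_k$ could vanish and the monotonicity conclusion would fail. Since the statement only claims monotonicity within the range of validity, no boundary behavior needs further attention.
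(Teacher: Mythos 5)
Your proposal is correct and follows essentially the same route as the paper: both form the ratio $\gamma_{k+1}/\gamma_k$, use the telescoping structure of the products, and verify the ratio exceeds $1$ using $0<\gamma_k<1$ (the paper simply notes the product factor $1-2\gamma_k^2/3$ lies in $(0,1)$, giving $\gamma_{k+1}/\gamma_k>3/2$ directly, instead of your cross-multiplied positivity check). No gaps; your argument is sound.
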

\begin{proof}
Note that $0<\gamma_1<1$ for $\theta \in (0,\frac{2\alpha}{1+\epsilon})$.
Let us assume that for some $k \in \mathbb{N}$ and for some range of $\theta$, $0<\gamma_k<1$. This also means for $j=1,2,\ldots k-1$, $0<\gamma_j<1$. Now   
$$\frac{\gamma_{k+1}}{\gamma_k}=\frac{3}{2-\gamma^2_{k}}\frac{1-(1- \frac{\theta^2}{2})\prod_{i=1}^{k}(1-\frac{2\gamma_i^2}{3})}{1-(1- \frac{\theta^2}{2})\prod_{i=1}^{k-1}(1-\frac{2\gamma_i^2}{3})}.$$
Since $0<(1-\frac{2\gamma_k^2}{3})<1$ for $0<\gamma_k<1$, therefore
$$\frac{\gamma_{k+1}}{\gamma_{k}}>\frac{3}{2}>1.$$
Thus, $\gamma_k$ is a strictly increasing sequence.
\end{proof}
In a similar fashion, it can  be demonstrated that the sequence $\lambda_k(\theta)$, defined in Eq.~\eqref{lam} is also a strictly increasing sequence.

\begin{lemma}\label{l2}
For $k \in \mathbb{N}$ and finite $\gamma_k(\theta)$, $\gamma_k(\theta) \geq \lambda_k(\theta)$.
\end{lemma}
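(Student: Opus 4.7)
My plan is to prove the bound $\gamma_k \geq \lambda_k$ by induction on $k$, relying on the observation that each factor appearing in the recursive definition~\eqref{gam} of $\gamma_k$ is a polynomial surrogate for the corresponding factor in the definition~\eqref{lam} of $\lambda_k$, obtained by replacing transcendental quantities with simple polynomial bounds. The three elementary estimates I will invoke are (i) $\cos\theta \geq 1-\theta^{2}/2$, (ii) $\sqrt{1-x^{2}}\geq 1-x^{2}$ for $x\in[0,1]$, and (iii) $\sin\theta \geq \theta/2$; the last of these holds comfortably on $\theta\in(0,\pi/4]$, which is the parameter range stipulated below Eq.~\eqref{initstate}.

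I would first dispose of the base case $k=1$ by checking that $(1-\cos\theta)/(2\alpha\sin\theta)\leq \theta/(2\alpha)$, equivalently $\theta\sin\theta\geq 1-\cos\theta$. This is immediate from (i) and (iii), or independently from the fact that the function $\theta\sin\theta - (1-\cos\theta)$ vanishes at $\theta=0$ and has non-negative derivative $\theta\cos\theta$ on $(0,\pi/2]$.

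The inductive step is the heart of the argument. Assume $\lambda_j\leq\gamma_j$ for $j=1,\dots,k$ and that $\gamma_{k+1}$ is finite; then $\gamma_k<1$, the inductive hypothesis forces $\lambda_k<1$, so $\lambda_{k+1}$ is also defined. Writing $\lambda_{k+1}=(1+\epsilon)N_\lambda/D_\lambda$ and $\gamma_{k+1}=(1+\epsilon)N_\gamma/D_\gamma$ with $N_\lambda,N_\gamma$ and $D_\lambda,D_\gamma$ read off from~\eqref{lam} and~\eqref{gam}, I will show simultaneously that $N_\lambda\leq N_\gamma$ and $D_\lambda\geq D_\gamma$. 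For $N_\lambda\leq N_\gamma$: apply (ii) in the form $\tfrac{1+2\sqrt{1-\lambda_i^{2}}}{3}\geq 1-\tfrac{2\lambda_i^{2}}{3}$, then the inductive hypothesis $\lambda_i\leq\gamma_i$ to pass to $1-\tfrac{2\gamma_i^{2}}{3}$, and finally apply (i) to the $\cos\theta$ prefactor. For $D_\lambda\geq D_\gamma$: apply (ii) in the form $\tfrac{1+\sqrt{1-\lambda_i^{2}}}{3}\geq \tfrac{2-\lambda_i^{2}}{3}$, then $\lambda_i\leq\gamma_i$ yields $\tfrac{2-\gamma_i^{2}}{3}$, and (iii) handles the $2\sin\theta$ prefactor. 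Since $N_\lambda\geq 0$ by definedness (hence $N_\gamma\geq 0$ too) and $D_\gamma>0$ (each $\gamma_i<1$ makes $2-\gamma_i^{2}>1$), the ratio inequality $\lambda_{k+1}\leq\gamma_{k+1}$ follows.

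The main obstacle is really just careful bookkeeping: one must track that each of the estimates (i)--(iii) is applied in the direction consistent with the inductive hypothesis, and that the ordering $\lambda_i\leq\gamma_i$ propagates through products in both numerator and denominator without sign flips. Beyond this, no analytic subtlety arises; the result ultimately hinges only on the three elementary inequalities noted above, together with the positivity guaranteed by the validity range $0<\gamma_i<1$.
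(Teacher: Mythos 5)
Your proposal is correct and follows essentially the same route as the paper: the paper's proof simply invokes the three elementary bounds $\cos\theta\geq 1-\theta^2/2$, $\sin\theta\geq\theta/2$, and $\sqrt{1-x^2}\geq 1-x^2$ and asserts the comparison, while you supply the straightforward inductive bookkeeping (factor-by-factor numerator/denominator comparison) that makes it rigorous. The only cosmetic quibble is that $N_\lambda\geq 0$ follows not from ``definedness'' but simply because $\cos\theta\leq 1$ and each factor $\tfrac{1+2\sqrt{1-\lambda_i^2}}{3}\leq 1$, which does not affect the argument.
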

\begin{proof}
For $\theta \in (0,\frac{\pi}{4}]$, $\cos\theta\geq 1- \frac{\theta^2}{2}$ and $\sin\theta \geq \frac{\theta}{2}$, and for $0<x <1$, $\sqrt{1-x^2}>{1-x^2}$. Using these inequalities, it can be shown that for each $k$, $$\gamma_k(\theta) \geq \lambda_k(\theta).$$
\end{proof}

\begin{theorem} 
For any $n \in \mathbb{N}$, there exists a local entangled state that is $n$-recyclable with respect to entanglement correlations. 
\end{theorem}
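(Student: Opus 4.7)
The plan is to reduce the theorem to a small-$\theta$ asymptotic estimate on the auxiliary sequence $\gamma_k(\theta)$. I would first fix arbitrary $\epsilon > 0$ and $\alpha \in (0, 1]$, and aim to exhibit a $\theta$ such that the sharpness parameters $\lambda_k(\theta)$ from~\eqref{lam} all lie in $(0, 1)$ for $k = 1, \ldots, n$. Given such a $\theta$, the measurement strategy~\eqref{Bmeas} with sharpness $\lambda_k(\theta)$ is physically admissible for every Bob $B_k$, and by the very definition in~\eqref{lam} one has $\langle W_k \rangle_{\rho_{AB_k}} < 0$ for all $k \leq n$, so each of the $n$ Bobs witnesses entanglement shared with the single Alice. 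The key reduction uses Lemmas~\ref{l1} and~\ref{l2}: it suffices to find $\theta$ with $\gamma_n(\theta) < 1$, since Lemma~\ref{l1} then forces $\gamma_k(\theta) < \gamma_n(\theta) < 1$ for $k \leq n$, and Lemma~\ref{l2} yields $\lambda_k(\theta) \leq \gamma_k(\theta) < 1$ in the same range. Hence the problem boils down to showing that, for each fixed $n$, one has $\gamma_n(\theta) \to 0$ as $\theta \to 0^+$.

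I would establish this asymptotic by induction on $k$. The base case is immediate, since $\gamma_1(\theta) = (1+\epsilon)\theta/(2\alpha) = O(\theta)$. For the inductive step, if $\gamma_i(\theta) = O(\theta)$ for $i \leq k$, then $\gamma_i^2 = O(\theta^2)$, and a Taylor expansion gives the numerator in~\eqref{gam} as $1 - (1 - \theta^2/2)\prod_{i=1}^{k}(1 - 2\gamma_i^2/3) = O(\theta^2)$, while the denominator $\alpha\theta \prod_{i=1}^{k}(2 - \gamma_i^2) = \Theta(\theta)$; multiplying by the $\theta$-independent prefactor $(1+\epsilon)3^{k}$ yields $\gamma_{k+1}(\theta) = O(\theta)$, with an implicit constant that depends on $k$ but not on $\theta$. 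Since $n$ is fixed, this guarantees $\gamma_n(\theta) < 1$ for all sufficiently small $\theta$.

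To conclude, I would pick $\theta$ small enough that $\gamma_n(\theta) < 1$ and simultaneously $\theta \in (0, \pi/4]$ and $(1 - \cos\theta)/(2\sin\theta) < \alpha$; the latter two constraints are automatic once $\theta$ is chosen sufficiently small, because $(1-\cos\theta)/(2\sin\theta) \to 0$ as $\theta \to 0^+$. The paper has already established that $\rho_{AB_1}$ from~\eqref{initstate} is entangled and admits an LHV model throughout this parameter range, so the construction delivers the required local entangled state. The main obstacle is the inductive asymptotic bound $\gamma_n(\theta) = O(\theta)$; once this is in hand, the theorem follows directly from Lemmas~\ref{l1} and~\ref{l2} without any further delicate analysis.
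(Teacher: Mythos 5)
Your proposal is correct and follows essentially the same route as the paper: reduce via Lemmas~\ref{l1} and~\ref{l2} to showing $\gamma_n(\theta)<1$ for sufficiently small $\theta$, and then argue by induction on $k$ that each $\gamma_k(\theta)$ vanishes as $\theta\to 0^+$ (the paper phrases this as $\gamma_{k+1}$ being an odd polynomial tending to zero on nested intervals $(0,\theta_k)$, while you phrase it as a quantitative $O(\theta)$ bound with $k$-dependent constants, which also keeps the iterates inside their domain of definition). The only difference is presentational; your $O(\theta)$ estimate is a slightly more explicit version of the same induction.
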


\begin{proof}
The theorem can be proved if it is possible to have $0<\lambda_n(\theta)<1$ for any $n\in \mathbb{N}$. However, it is enough to prove that $0<\gamma_n(\theta)<1$, since $0<\gamma_n(\theta)<1 \implies 0<\lambda_n(\theta)<1$ from Lemma~\ref{l2}, i.e., $\lambda_{n}$ lies in the valid region. 

We have 
$0<\gamma_1(\theta)<1$ for $\theta \in (0,\frac{2\alpha}{1+\epsilon})$. But the entangled state $\rho_{AB_1}$ admits LHV model if $\theta \in (0,\frac{\pi}{4}]$.
Therefore, let us denote the minimum of $\frac{2\alpha}{1+\epsilon}$ and  $\frac{\pi}{4}$ as $\theta_1$. 

Now 
$\gamma_1(\theta) \to 0$ as $\theta \to 0$. 
Also $0<\gamma_1(\theta)< 1$ for $\theta \in (0,\theta_1)$. For $\theta \in (0,\theta_1)$, $\gamma_2(\theta)$ is an odd function which goes to zero as $\theta \to 0$, and since $\gamma_2(\theta)>\gamma_1(\theta)$ (from Lemma \ref{l1}), there exists some $\theta_2 \in  (0,\theta_1)$ such that $0<\gamma_2(\theta)<1$ for $\theta \in (0,\theta_2)$.

Now assume that there exist some $\theta_k \in (0,\theta_{k-1})$ such that $0<\gamma_k(\theta)<1$, for $\theta \in (0,\theta_k)$. Therefore, $\gamma_{k+1}(\theta)$ is an odd polynomial of $\theta \in (0,\theta_k)$, going to zero as $\theta \to 0$. Again since, $\gamma_{k+1}(\theta)> \gamma_{k}(\theta),$ thus there exist a $\theta_{k+1} \in (0, \theta_k]$, such that $0<\gamma_{k+1}(\theta)<1$ for $\theta \in (0,\theta_{k+1}).$

\begin{figure}\label{fig}
    \centering
    \includegraphics[width=0.9\columnwidth]{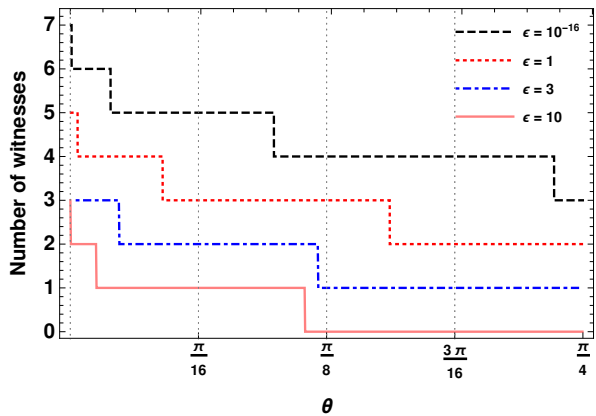}
    \caption{
    How many witnesses of entanglement? We plot here the number of Bobs who 
    can 
    witness entanglement with Alice, with respect to $\theta$, 
    for different values of $\epsilon$. 
There is a 
decrease in the number of Bobs with an increasing value of $\theta$. 
It is noteworthy that an increasing \(\theta\)  draws $\rho_{AB_{1}}$ towards the parameter regime in which it violates the CHSH Bell inequality, and away from the parameter point where it is separable.
All quantities used are dimensionless, except \(\theta\), which is measured in radians.
}
    \label{plot}
\end{figure}

Therefore, by induction, one can get to any number, $n \in \mathbb{N}$, of Bobs, such that $0<\lambda_1(\theta)<\lambda_2(\theta)<\ldots<\lambda_n(\theta)<1$ for $\theta \in (0,\theta_n)$.
\end{proof}
Notice that the number of Bobs, who can witness entanglement with a single Alice, \emph{increases} as $\theta \to 0$, which is the limit 
in which
entanglement of $\rho_{AB_1}$ also tends to zero. In fig. \ref{plot}, we  plot the number of Bobs who can witness entanglement with Alice, against $\theta$, for
different values $\epsilon$.  Clearly, as $\theta \to 0$ and $\epsilon \to 0$, the number of Bobs, who can detect entanglement via the witnessing procedure, increases.
It is important to point out that one can also reach arbitrary number of entanglement detections using the adopted measurement settings for Bell inequality-satisfying entangled states of the form, 
\begin{equation}
  \begin{split}
   \rho'_{AB_1}=\frac{1}{4}[\mathbb{I}_2 \otimes \mathbb{I}_2 -\cos \theta\sigma_1 \otimes \sigma_1 -  \alpha\sin \theta \sigma_2 \otimes \sigma_2\\ -~\beta \sin \theta \sigma_3 \otimes \sigma_3], 
\end{split}
\end{equation}
where $1\geq\alpha>\beta>0$, when $\theta \to 0$ and $\beta \to \alpha$.

\section{Conclusion}
We considered a sequential scenario, posed in \cite{Silva15} and established in \cite{Brown20}, that proves the possibility of arbitrarily many Bobs violating Bell inequalities independently by sharing a single entangled state with Alice.
We established that it is also possible to 
witness entanglement arbitrarily many times in the same scenario, even for shared states that does not violate a Bell inequality. 
We have constructed a state that meets these requirements and a suitable witness operator which utilizes a measurement strategy involving unsharp measurements for the Bobs. 
This shows that not only ``nonlocal'', i.e., Bell inequality violating quantum correlations, even ``local'' ones can be witnessed arbitrarily many times.
For the given example, we also provided an iterative bound on the sharpness of the measurements by any Bob, that depends on the initial state and the sharpness of the measurement performed by the previous Bobs. 
It was shown that using the given measurement strategy, the number of Bobs who can witness entanglement decreases as the state approaches its Bell inequality violating parameter regime, and away from the point on the parameter space where it is separable. 

We believe that the result that even ``local’’ entangled states - which are arguably weaker in quantum correlation content than the states that violate a Bell inequality - can allow an infinite number of sequential observers to detect entanglement, is a fundamentally interesting property about entanglement in general, and could lead to future applications.

\acknowledgements
We thank Peter Brown for an illuminating email. MP acknowledges the NCN (Poland) grant (grant number 2017/26/E/ST2/01008), and thanks Ray Ganardi for discussions. 
The research of CS was supported in part by the INFOSYS scholarship.
The authors from  Harish-Chandra Research Institute acknowledge partial support from the Department of Science and Technology, Government of India through the QuEST grant (grant number DST/ICPS/QUST/Theme-3/2019/120). 
\bibliography{apssamp}
\end{document}